\title{Some remarks about the centre of mass of two particles in spaces of constant curvature}
\author{  Luis C.~Garc\'ia-Naranjo }
\numberwithin{equation}{section}
\numberwithin{table}{section}
\numberwithin{figure}{section}
\newtheorem{theorem}{Theorem}[section]
\newtheorem{proposition}[theorem]{Proposition}
\theoremstyle{definition}
\newtheorem{remark}[theorem]{Remark}
\newtheorem*{remarks*}{Remarks}
\providecommand{\customgenericname}{}
\newcommand{\newcustomtheorem}[2]{%
  \newenvironment{#1}[1]
  {%
   \renewcommand\customgenericname{#2}%
   \renewcommand\theinnercustomgeneric{##1}%
   \innercustomgeneric
  }
  {\endinnercustomgeneric}
}
\def\headcolour{\color{Grey}}
\headcolour\textsc{ }]{\headcolour\textsc{ }}
\def\restr#1{\,\vrule height1.2ex width.4pt
  depth0.8ex\lower0.4ex\hbox{\scriptsize $\,#1$}}
\newcommand{\R}{\mathbb{R}}
\newcommand{\g}{\mathfrak{g}}
\newcommand{\so}{\mathfrak{so}}
\newcommand{\SO}{\mathrm{SO}}
\newcommand{\q}{\bm{q}}
\newcommand{\vv}{\bm{v}}
\def\d{\mathrm{d}}
\begin{document}

\maketitle

\abstract{The  concept of centre of mass of two particles in 2D spaces of constant Gaussian curvature is discussed by recalling the 
notion of  ``relativistic rule of lever" introduced by Galperin [Comm. Math. Phys. {\bf 154} (1993), 63--84] and comparing it with two other definitions of centre of mass that arise naturally on the treatment of the 
2-body problem in spaces of constant curvature: firstly  as the collision point of particles that are initially at rest, and secondly as the centre of rotation of steady rotation solutions. It is shown
that if  the particles
have distinct masses then these definitions are equivalent only if the curvature vanishes and instead lead to three different notions of centre of mass in the general case.}

\vspace{0.3cm}
\noindent
MSC2010 numbers: 53A17, 70F05, 70A05
\begin{flushright}\em  Dedicated to James Montaldi. \end{flushright}

\section{Introduction}

Consider two particles with masses $\mu_1, \mu_2>0$ located at $\q_1, \q_2 \in \R^2$. As is well-known, their centre of mass is the point 
\begin{equation}
\label{eq:c-of-mass}
\bar \q:=\frac{\mu_1\q_1+ \mu_2\q_2}{\mu_1+\mu_2} \in \R^2.
\end{equation}
Denote by $\ell$ the  line segment connecting $\q_1$ and $\q_2$. Then the centre of mass 
mass lies on $\ell$ and satisfies
\begin{equation}
\label{eq:levRule0}
\mu_1 r_1 =\mu_2r_2,
\end{equation}
where $r_j =|\bar \q -\q_j|$ is the Euclidean distance between $\bar \q$ and $\q_j$, $j=1,2$.  The familiar Eq.~\eqref{eq:levRule0}  may be derived from the  following three different characterisations
of the centre of mass:
\begin{description}
\item[C1.  Lever rule:]  suppose that the segment $\ell$ is a massless horizontal beam. The centre of mass $\bar \q$ is the unique point on the beam such that, if a hinge is located at this point, then the 
torques exerted by the two masses balance.

\item[C2. Collision point:] suppose that the particles are under the influence of  an attractive potential  force, depending  only  on their mutual distance  (for instance gravity).  If the particles are initially at rest 
then they will eventually collide at the centre of mass $\bar \q$.

\item[C3. Centre of steady rotation:]  suppose again that the particles are under the influence of  an attractive potential  force, depending  only  on their mutual distance   (for instance gravity).
If a solution exists  in which the particles rotate uniformly along concentric circles, maintaining a constant distance at all time, then the centre of mass $\bar \q$ coincides with the centre of the circles.

\end{description}

\begin{remark}
 It is well-known that solutions to the 2-body problem satisfying the conditions in C3 do exist
 for any positive distance between the particles and for any attractive potential.
 In fact they are the starting point to consider the
planar circular restricted 3-body problem.
\end{remark}
 
 These characterisations are illustrated in Figure\,\ref{F1}.

\begin{figure}[h]
\captionsetup{width=0.8\textwidth,font=small}
\centering
  \begin{subfigure}[t]{0.27\textwidth}
  \centering
  \vspace{-2.1cm}
    \includegraphics[width=0.8\textwidth]{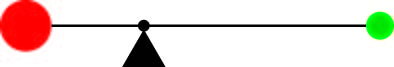}
    \put(-63,15){\small $\bar \q$}
        \put(-106,15){\small $\mu_1$}
         \put(-78,2){\small $r_1$}
           \put(-35,2){\small $r_2$}
           \put(0,15){\small $\mu_2$}
      \vspace{1.4cm}
    \caption{C1. Lever rule.}
    \label{fig:C1}
  \end{subfigure}
  \begin{subfigure}[t]{0.27\textwidth}
    \vspace{-2.2cm}
  \centering
    \includegraphics[width=0.8\textwidth]{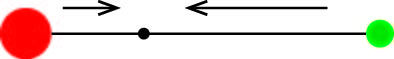}
      \put(-63,11){\small $\bar \q$}
        \put(-104,13){\small $\mu_1$}
         \put(-78,-2){\small $r_1$}
           \put(-35,-2){\small $r_2$}
           \put(-2,13){\small $\mu_2$}
     \vspace{1.5cm}
    \caption{C2. Collision point.}
    \label{fig:2}
  \end{subfigure}
  \begin{subfigure}[t]{0.30\textwidth}
  \centering
    \includegraphics[width=0.8\textwidth]{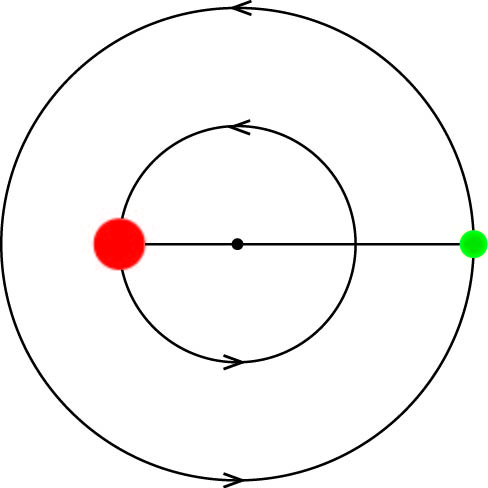}
     \put(-57,58){\small $\bar \q$}
        \put(-95,58){\small $\mu_1$}
         \put(-70,46){\small $r_1$}
           \put(-25,46){\small $r_2$}
           \put(0,58){\small $\mu_2$}
    \caption{\centering C3. Centre of steady \newline rotation.}
    \label{fig:2}
  \end{subfigure}
  \caption{ Illustration of the centre of mass according to the characterisations C1, C2 and C3.}
  \label{F1}
\end{figure}

In this note we consider the generalisation of the concept  of  centre of mass of two particles to
2D  spaces of constant non-zero Gaussian curvature $\kappa$. For $\kappa>0$ this is the sphere of radius $1/\sqrt{\kappa}$, and for $\kappa<0$ there are various well-known models and we choose to work 
on the hyperboloid or Lorenz model (see 
section~\ref{S:basic defns} for details). 

We consider three different  definitions of the centre of mass obtained by enforcing conditions C1, C2 and C3 in spaces of constant curvature. As is natural to expect, independently of the choice of condition, the resulting centre of mass lies along the  shortest geodesic connecting the two masses, and  coincides with the mid-point on this geodesic if the masses are equal. To guarantee uniqueness,
for $\kappa>0$  we do not consider antipodal configurations on the sphere. 

 Throughout the paper we will denote by $r_j$, $j=1,2$, the Riemannian distance from $\mu_j$ to the centre of mass and by $r=r_1+r_2$ the total Riemannian
distance between the masses. In accordance to what was said before,
 if the masses are equal then $r_1=r_2$. On the other hand, if
 they are distinct, then a certain generalisation of Eq.~\eqref{eq:levRule0} holds.
Interestingly, the form of this generalisation depends on which characterisation of the centre of mass, C1, C2 or C3, one is dealing with, as we now explain.

The generalisation of the rule of lever C1 was considered in depth by Galperin~\cite{Galperin}. 
His work goes beyond the definition of the centre of mass of two particles and 
succeeds to define the {\em centroid}\footnote{the centroid is located at the centre of mass but also  has
a mass assigned to it, that in the euclidean case  is the total mass of the particles.} 
  of $N$ particles in spaces of constant curvature of any dimension. 
  Galperin's  paper only deals with the values of the curvature $\kappa=\pm1$,  but his construction may be naturally  extended to 
  all values of $\kappa\in \R$ (see section~\ref{S:Galperin}).  In this case Eq.~\eqref{eq:levRule0} is replaced  by {\em the relativistic rule of lever}: 
  \begin{equation}
  \label{eq:GC1}
\mbox{Generalisation of C1} \implies \begin{cases} \mu_1\sin (\sqrt{\kappa} r_1) =\mu_2 \sin (\sqrt{\kappa} r_2), \quad \mbox{if $\kappa>0$}; \\  \mu_1\sinh (\sqrt{-\kappa} r_1)  =\mu_2 \sinh (\sqrt{-\kappa} r_2), \quad \mbox{if $\kappa<0$,}
\end{cases}
\end{equation}
(see Eqs. (5) and (6) in~\cite{Galperin})  which determines the centre of mass uniquely.

To the best of my knowledge, the generalisation of C2 has not been considered before apart from a private conversation that I had with James Montaldi while 
visiting him in Manchester in 2016.  In this paper we develop on this conversation and 
prove that this generalisation recovers the functional form of Eq.~\eqref{eq:levRule0}, namely,
  \begin{equation}
  \label{eq:GC2}
\mbox{Generalisation of C2} \implies \mu_1r_1=\mu_2r_2 \quad \mbox{for all $\kappa \in \R$.}
\end{equation}

Finally,  the generalisation of C3 appears  in recent works concerned with the classification of relative equilibria of the 2-body problem in spaces of constant curvature
\cite{BMK, GNMPR, BGNMM, GNM} and Eq.~\eqref{eq:levRule0} is replaced by
 \begin{equation}
  \label{eq:GC3}
\mbox{Generalisation of C3} \implies \begin{cases} \mu_1\sin (2\sqrt{\kappa} r_1) =\mu_2 \sin (2\sqrt{\kappa} r_2), \quad \mbox{if $\kappa>0$}; \\  \mu_1\sinh (2\sqrt{-\kappa} r_1)  =\mu_2 \sinh (2\sqrt{-\kappa} r_2), \quad \mbox{if $\kappa<0$.}
\end{cases}
\end{equation}
The relations in \eqref{eq:GC3}  uniquely specify the centre of mass except when  $\kappa>0$ and $r= \frac{\pi}{2\sqrt{\kappa}}$ (i.e. when the masses subtend a right angle). 
In this exceptional case, the centre of mass is undefined for distinct masses and we define it  to coincide with 
the midpoint if  the masses are equal.

A comparison between these three generalisations is illustrated in Figure~\ref{F2} below.
 There it  is assumed that $\mu_2= 2\mu_1$, $r_1=1$, and the 
 value of $r_2$ is graphed as a function of the curvature $\kappa$ according
to \eqref{eq:GC1},  \eqref{eq:GC2} and  \eqref{eq:GC3}. The resulting relation
between $\kappa$ and $r_2$ is one-to-one except for  Eq. \eqref{eq:GC3} when $\kappa>0$.
In this case  $r_2$ is  a two-valued function of $\kappa$: one of the possible values of $r_2$ satisfies 
$0<r_2+1< \pi/ 2\sqrt{\kappa}$, 
corresponding to an  acute arc between the masses, and the other value of $r_2$ satisfies 
$ \pi/ 2\sqrt{\kappa}<r_2+1<\pi/\sqrt{\kappa}$, corresponding to an obtuse  arc.
 These two values respectively correspond to the acute and obtuse relative equilibria determined recently 
in \cite{BGNMM, GNM}. Note that the three graphs intersect only when $\kappa=0$ and instead
lead to  different notions of centre of mass for  $\kappa\neq 0$.

 \begin{figure}[h]
 \captionsetup{width=0.8\textwidth,font=small}
\centering
\includegraphics[width=9cm]{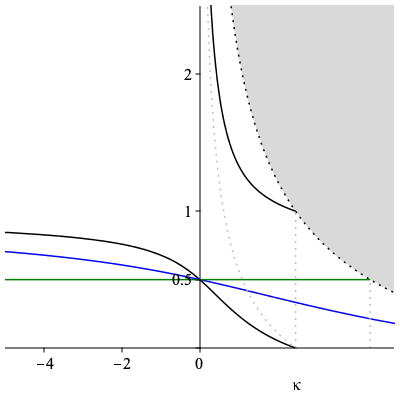}
\put(-80,21){\small $\pi^2/4$}
\put(-33,21){\small $4\pi^2/9$}
\put(-312,73){\small $C2$, Eq. \eqref{eq:GC2}}
\put(-312,92){\small $C1$, Eq. \eqref{eq:GC1} }
\put(-312,106){\small $C3$, Eq. \eqref{eq:GC3}}
\put(-146,248){\small $r_2$}
\put(-109,275){\rotatebox{250}{\rlap{\makebox[2cm]{||}}}}
\put(-72,200){\rotatebox{250}{\rlap{\makebox[2cm]{|||}}}}
\put(-159,150){\rotatebox{340}{\rlap{\makebox[2cm]{|||}}}}
\put(-136,260){\small $C3$, Eq. \eqref{eq:GC3}}
\put(-85,190){\small $r=1+r_2=\pi/ \sqrt{\kappa}$}
\put(-214,155){\small $r=1+r_2=\pi/2 \sqrt{\kappa}$}
\label{F:comp}
\caption{The value of $r_2$ as a function of $\kappa$ according to Eqs. \eqref{eq:GC1}, \eqref{eq:GC3} and \eqref{eq:GC2} under the assumption that $ 2\mu_1=\mu_2$ and $r_1=1$.
Note that for $\kappa>0$ there are two branches for \eqref{eq:GC3} as described in the text.
The shaded area corresponds to values of $(\kappa,r_2)$ that are forbidden since they
violate the restriction that $r=1+r_2<\pi/ \sqrt{\kappa}$.}
\label{F2}
\end{figure}

The main body of the paper is devoted to give simple proofs of  Eqs. \eqref{eq:GC1}, \eqref{eq:GC2} and \eqref{eq:GC3}. 
We begin by introducing our models of the spaces of constant curvature in Section \ref{S:basic defns}. 
We then review the construction of Galperin~\cite{Galperin} and extend it to general values of the 
curvature in Section~\ref{S:Galperin} to establish  \eqref{eq:GC1}. Given that the proofs 
of   Eqs.~\eqref{eq:GC2} and~\eqref{eq:GC3} that we present  rely on the 
conservation of momentum,  we devote Section~\ref{S:MomMap} 
to the calculation of the momentum map of the 2-body problem on spaces of constant curvature.
 Once this is done, we give simple  proofs of  \eqref{eq:GC2} and \eqref{eq:GC3}  in
Sections  \ref{S:GC2} and \ref{S:GC3}  respectively. We finish the paper with some conclusions in Section~\ref{S:Conc}.

\begin{remark}
As  indicated  by one of the referees,  the centre of mass $\bar \q$
in  \eqref{eq:c-of-mass} may  also be characterised as the minimiser of the function $F:\R^2\to \R$,
\begin{equation*}
F(\q)=\mu_1 |\q_1-\q|^2+\mu_2 |\q_2-\q|^2.
\end{equation*}
This characterisation can be generalized to  the space $\Sigma_\kappa$ of constant curvature $\kappa$ 
 by requiring that the centre
of mass is the minimiser of $F_\kappa:\Sigma_\kappa \to \R$ given by
\begin{equation*}
F_\kappa(\q)=\mu_1 d_\kappa(\q_1,\q)^2+\mu_2 d_\kappa(\q_2,\q)^2,
\end{equation*}
where $d_\kappa(\q_j,\q)$ is the Riemannian distance between $\q_j$ and $\q$.
Using an approach similar to the one that we follow in Section~\ref{S:Galperin} it
 is not difficult to prove that  this generalisation leads again to the functional relation $\mu_1 r_1=\mu_2 r_2$ 
corresponding to the generalisation of C2.

\end{remark}


\section{Basic working definitions of the spaces of constant curvature}
\label{S:basic defns}

Let $K_\sigma$ denote the diagonal $3\times 3$ matrix $K_\sigma:=\mbox{diag}(1,1,\sigma)$ where $\sigma =\pm1$. The induced bilinear form in $\R^3$ shall be denoted by 
$\langle \cdot , \cdot \rangle_\sigma$, namely,
\begin{equation*}
\langle \bm{u} , \bm{v} \rangle_\sigma :=  \bm{u}^TK_\sigma \vv \in \R, \quad \mbox{for} \quad  \bm{u} , \bm{v}\in \R^3.
\end{equation*}
 Note that $\langle \cdot , \cdot \rangle_{+1}$ is the standard Euclidean scalar product, whereas $\langle \cdot , \cdot \rangle_{-1}$ is the Minkowski pseudo-scalar product. We shall  also denote
\begin{equation*}
\| \vv \|^2_{\sigma}:=\langle \bm{v} , \bm{v} \rangle_\sigma.
\end{equation*}
 Our   model for the (complete and simply connected) space $\Sigma_\kappa$ of non-zero constant Gaussian curvature $\kappa$ is as follows according to the sign of $\kappa$:
\begin{description}
\item[ If $\kappa>0$] then 
\begin{equation*}
\Sigma_\kappa = \left \{ \q \in \R^3 \, : \, \| \q\|^2_{+1}=\frac{1}{\kappa}, \; \mbox{equipped with the restriction of} \; \langle \cdot , \cdot \rangle_{+1}  \right \};
\end{equation*}
e.g. $\Sigma_\kappa$  is the sphere of radius $\frac{1}{\sqrt{\kappa}}$ centred at the origin in $\R^3$, 
equipped with the Riemannian metric that is inherited from the euclidean ambient space. We recall that the 
geodesics on this space are the great circles and that the distance $r\in [0,\pi/\sqrt{\kappa}]$ between two points $\q_1, \q_2\in \Sigma_\kappa$ satisfies
\begin{equation*}
\cos (\sqrt{\kappa} r) = \kappa \langle \q_1, \q_2 \rangle_{+1}.
\end{equation*}

\item[ If $\kappa<0$] then 
\begin{equation*}
\Sigma_\kappa = \left \{ \q = (q_1,q_2,q_3)\in \R^3 \, : \, \| \q\|^2_{-1}=\frac{1}{\kappa}, \;  q_3>0, \; \mbox{equipped with the restriction of} \; \langle \cdot , \cdot \rangle_{-1}  \right \};
\end{equation*}
e.g. $\Sigma_\kappa$  is the upper sheet of the hyperboloid $\| \q\|^2_{-1}=\frac{1}{\kappa}$, which has its vertex at the point $(0,0,1/\sqrt{-\kappa})$, equipped with the Riemannian metric 
which is inherited from the Minkowski pseudo-metric. The geodesics in this case are the hyperbolas obtained as intersections of $\Sigma_\kappa$ with planes passing through 
the origin in $\R^3$, and the distance $r\in [0,\infty)$ between two points $\q_1, \q_2\in \Sigma_\kappa$ satisfies
\begin{equation*}
\cosh (\sqrt{-\kappa} r) = \kappa \langle \q_1, \q_2 \rangle_{-1}.
\end{equation*}

\end{description}

For the rest of the paper it will convenient to note that  $\Sigma_\kappa$ may be parametrised as:
\begin{equation*}
\begin{split}
&\q = \frac{1}{\sqrt{\kappa}}(\cos \varphi \sin \theta, \sin \varphi \sin \theta, \cos \theta), \quad \d s^2= \frac{1}{\kappa} \left ( \d\theta^2+ \sin ^2 \theta\,\d \varphi^2\right ), \qquad \mbox{if $\kappa>0$},  \\
&\q = \frac{1}{\sqrt{-\kappa}}(\cos \varphi \sinh \theta, \sin \varphi \sinh \theta, \cosh \theta), \quad \d s^2= \frac{1}{-\kappa} \left ( \d\theta^2+ \sinh ^2 \theta\,\d \varphi^2\right ), \qquad \mbox{if $\kappa<0$}.
\end{split}
\end{equation*}

\subsection*{Isometries}

We finish this section by recalling that the group $G_\kappa$ of orientation preserving isometries of
$\Sigma_\kappa$ consists of the $3\times 3$ real matrices $g$ with positive determinant and with the property that $g^TK_\sigma g=K_\sigma$, i.e.,
\begin{equation}
\label{eq:G_kappa}
G_\kappa = \begin{cases} \SO(3) \quad \mbox{if} \quad \kappa>0, \\ \SO(2,1) \quad \mbox{if} \quad \kappa<0.\end{cases}
\end{equation}
The action of $G_\kappa$ on $\Sigma_\kappa$ is by standard matrix multiplication.

Regardless of the sign of  $\kappa$, we recall that a fundamental property of  $\Sigma_\kappa$ is that,  
as a Riemannian manifold,
it is both  {\em homogeneous} and {\em isotropic} . Homogeneity
means that for any  $\q_1, \q_2 \in \Sigma_\kappa$ there exists $g\in G_\kappa$ 
such that $g\q_1= \q_2$;   isotropy means that for any $\q\in \Sigma_\kappa$ and any two unit tangent vectors $v_1, v_2 \in T_{\q}\Sigma_\kappa$, there exists $g\in G_\kappa$ such that
$g\q= \q$ and $g_*v_1= v_2$.
  These properties will be used in Sections \ref{S:Galperin}, \ref{S:GC2} and \ref{S:GC3} 
  below to assume, without loss of generality,
   that the masses are located at
 a convenient configuration which simplifies our calculations.

\section{The relativistic  rule of lever \eqref{eq:GC1}.}
\label{S:Galperin}

Consider two masses $\mu_1$, $\mu_2$, located at $\q_1, \q_2\in \Sigma_\kappa$. Following Galperin~\cite{Galperin} we define  their centre of mass $\bar  \q$ as the unique intersection of the ray 
\begin{equation*}
\left \{ s(\mu_1 \q_1 +  \mu_2 \q_2) \, : \, s\in \R \right \}
\end{equation*}
with $\Sigma_\kappa$.  It is shown  in~\cite{Galperin} that this is a well-defined notion that behaves well under the action of isometries and satisfies a set of axioms.

The above definition of centre of mass recovers the standard  lever rule for zero curvature if one realises $\R^2$ as the horizontal plane imbedded   in $\R^3$  by the condition that $q_3=1$. Indeed, if 
$\q_1, \q_2\in \R^2$ then the point
\begin{equation*}
s(\mu_1 (\q_1,1)+\mu_2 (\q_2,1))\in \R^3
\end{equation*}
has third component equal to $1$ if and only if $s=\frac{1}{\mu_1+\mu_2}$. Hence, the above definition of the centre of mass recovers Eq. \eqref{eq:c-of-mass}.
Below we show  that for  $\kappa \neq 0$,  Galperin's definition leads to the relativistic rule of lever~\eqref{eq:GC1}.

%
%
%
%
%
\paragraph{Case $\kappa>0$.} Because of the
homogeneity and isotropy of $\Sigma_\kappa$ we may suppose  without loss of generality that the 
masses are located at 
\begin{equation*}
\q_1 =\frac{1}{\sqrt{\kappa}}(0,-\sin \alpha_1, \cos \alpha_1 ), \quad \q_2=\frac{1}{\sqrt{\kappa}}(0,\sin \alpha_2, \cos \alpha_2 ) \in \Sigma_\kappa, \qquad \alpha_1, \, \alpha_2\in (0,\pi/2),
\end{equation*}
and that, according to Galperin's definition,  their centre of mass   is the north pole $\frac{1}{\sqrt{\kappa}}(0,0,1 )\in \Sigma_\kappa$. The condition that 
\begin{equation*}
s(\mu_1 \q_1 +  \mu_2 \q_2)=\frac{1}{\sqrt{\kappa}}(0,0,1 ),
\end{equation*}
is satisfied for an $s\in \R$ if and only if $\mu_1 \sin \alpha_1 =\mu_2\sin \alpha_2$. Considering that the Riemannian distance from $\q_j$ to the north pole is $r_j=\alpha_j/\sqrt{\kappa}$, $j=1,2$, we 
obtain  $\mu_1 \sin (\sqrt{\kappa} r_1) =\mu_2 \sin (\sqrt{\kappa} r_2)$, as required.

\paragraph{Case $\kappa<0$.} The proof is analogous to the above. This time, owing to the
homogeneity and isotropy of $\Sigma_\kappa$ we may suppose   without loss of generality  that the 
masses are located at 
\begin{equation*}
\q_1 =\frac{1}{\sqrt{-\kappa}}(0,-\sinh \alpha_1, \cosh \alpha_1 ), \quad \q_2=\frac{1}{\sqrt{-\kappa}}(0,\sinh \alpha_2, \cosh \alpha_2 ) \in \Sigma_\kappa, \qquad \alpha_1, \, \alpha_2\in (0,\infty),
\end{equation*}
and that, according to Galperin's definition,  their centre of mass   is the hyperboloid's vertex $\frac{1}{\sqrt{-\kappa}}(0,0,1 )\in \Sigma_\kappa$. As before, the condition
that 
\begin{equation*}
s(\mu_1 \q_1 +  \mu_2 \q_2)=\frac{1}{\sqrt{-\kappa}}(0,0,1 ),
\end{equation*}
is satisfied for an $s\in \R$ if and only if $\mu_1 \sinh \alpha_1 =\mu_2\sinh \alpha_2$. Given that the Riemannian distance from $\q_j$ to the  hyperboloid's vertex 
 is $r_j=\alpha_j/\sqrt{-\kappa}$, $j=1,2$, we 
obtain  $\mu_1 \sinh (\sqrt{-\kappa} r_1) =\mu_2 \sinh (\sqrt{-\kappa} r_2)$, as required.

\section{The conserved momentum of the 2-body problem on spaces of constant Gaussian curvature}
\label{S:MomMap}

In sections \ref{S:GC2} and \ref{S:GC3}  ahead we give proofs of   Eqs. \eqref{eq:GC2} and \eqref{eq:GC3}. Such proofs rely entirely on the conservation of momentum. In the zero-curvature case one may prove that  \eqref{eq:levRule0} arises a consequence of C2 and C3 by using  the  conservation of the linear momentum:
\begin{equation}
\label{eq:linmom}
\bm{p}=\mu_1 \dot \q_1+\mu_2 \dot \q_2.
\end{equation}
A similar proof may be given for $\kappa\neq 0$ but one requires the full components of the momentum map.  The purpose of this section is to compute this momentum map which is given in Proposition~\ref{prop:momentum} below. 

\paragraph{The configuration space and the Lagrangian.} The configuration space for the 2-body problem in $\Sigma_\kappa$ is
\begin{equation*}
M_\kappa= \Sigma_\kappa\times \Sigma_\kappa \setminus \Delta_\kappa,
\end{equation*}
where $\Delta_\kappa$ denotes the set of collision configurations if $\kappa<0$; and the set of 
collision and antipodal configurations if  $\kappa>0$.
The Lagrangian $L:TM_\kappa\to \R$ is given by
\begin{equation}
\label{eq:Lag}
L(\q_1,\q_2, \dot \q_1,\dot \q_2)= \frac{\mu_1}{2} \| \dot \q_1 \|_\sigma^2+ \frac{\mu_2}{2} \| \dot \q_2 \|_\sigma^2-V_\kappa(r), \qquad \sigma:=\mbox{sign}(\kappa),
\end{equation}
where, as usual, $\mu_1, \mu_2>0$ are the particles' masses, 
$r>0$ is the Riemannian distance between $\q_1,\q_2\in \Sigma_\kappa$, the velocity vectors $ \dot \q_j \in T_{\q_j}\Sigma_\kappa$, $j=1,2$, and
$V_\kappa:I_\kappa \to \R$ is an attractive potential e.g. $V_\kappa'(r)>0$. The domain $I_\kappa$ of 
$V_\kappa$ varies with $\kappa$: it is the infinite interval  $(0,\infty)$ if $\kappa<0$ and the 
finite interval $(0,\pi/\sqrt{\kappa})$ if $\kappa>0$.\footnote{ The accepted generalisation of 
Newton's $1/r$ gravitational  law to spaces of constant curvature requires that
 $V_\kappa$ is proportional to $- \sqrt{\kappa}\cot (\sqrt{\kappa} r)$ if $\kappa>0$ and to $- \sqrt{-\kappa}\coth (\sqrt{-\kappa} r)$ if $\kappa<0$ (see e.g. \cite{Kozlov-Harin, Carinena}). The results of this paper are valid for more general attractive potentials 
$V_\kappa$.}

\paragraph{Symmetries.} The group $G_\kappa$ of orientation preserving isometries of $\Sigma_\kappa$ (given by \eqref{eq:G_kappa} above) acts diagonally on $M_\kappa$, i.e.
\begin{equation*}
g\cdot (\q_1,\q_2) = (g\q_1,g\q_2).
\end{equation*}
and its tangent lift leaves the Lagrangian
\eqref{eq:Lag} invariant. 

The corresponding Lie algebra $\g_\kappa$ is formed by the $3\times 3$ real matrices $\xi$ that satisfy $\xi^TK_\sigma  + K_\sigma \xi=0$, i.e.:
\begin{equation*}
 \mathfrak{g}_\kappa = \begin{cases} \so(3) \quad \mbox{if $\kappa>0$}, \\ \so(2,1) \quad \mbox{if $\kappa<0$}. \end{cases}
\end{equation*}

\paragraph{Momentum map.} According to the general theory of lifted actions for mechanical systems
\cite{MaRa}  there exists a momentum map  $\bm{J}:TM_\kappa\to \mathfrak{g}_\kappa^*$, that  is 
conserved along the solutions of the equations of motion defined by the Lagrangian $\eqref{eq:Lag}$. Considering that the infinitesimal action of $\xi\in \g_\kappa$ on $(\q_1,\q_2)\in M_\kappa$ is again linear,
i.e.
\begin{equation*}
\xi \cdot (\q_1,\q_2) = (\xi \q_1,\xi\q_2),
\end{equation*}
the momentum map $\bm{J}$ is defined by:
\begin{equation}
\label{eq:MomMapAbstract} 
\left \langle \bm{J}(\q_1,\q_2, \dot \q_1,\dot \q_2) , \xi \right \rangle_{\g_\kappa} = \mu_1 \left \langle \xi \q_1, \dot \q_1 \right  \rangle_{\sigma}+  \mu_2 \left \langle \xi \q_2, \dot \q_2 \right  \rangle_{\sigma},
\end{equation}
where $\langle \cdot , \cdot  \rangle_{\g_\kappa}$ denotes the dual pairing between $\g_\kappa^*$ and  $\g_\kappa$.

In order to obtain an explicit expression for $\bm{J}(\q_1,\q_2, \dot \q_1,\dot \q_2)$,  we identify $\g_\kappa$ with $\R^3$ as vector spaces by introducing the following ordered basis
of $\g_\kappa$:
\begin{equation*}
\xi_1=\begin{pmatrix} 0 & 0 & 0 \\ 0& 0 & - \sigma \\ 0 &  1 & 0 \end{pmatrix}, \qquad \xi_2=\begin{pmatrix} 0 & 0 & \sigma \\ 0& 0 & 0 \\ -1 &  0 & 0 \end{pmatrix}, \qquad   
 \xi_3=\begin{pmatrix} 0 & -1 & 0 \\ 1 & 0 & 0 \\ 0 &  0 & 0 \end{pmatrix}.
\end{equation*}
We also  identify the dual space of $\R^3$ with itself via the euclidean scalar product. Under these identifications, the range of the momentum map is $\R^3$ and we have:
\begin{proposition}
\label{prop:momentum}
The momentum map 
\begin{equation}
\label{eq:MomMap} 
\bm{J}:TM_\kappa \to \R^3, \qquad ( \q_1,\q_2, \dot \q_1,\dot \q_2) \mapsto \mu_1 (K_\sigma \q_1)\times (K_\sigma \dot \q_1)+ \mu_2 (K_\sigma \q_2)\times (K_\sigma \dot \q_2),
\end{equation}
where $\times$ denotes the standard vector product in $\R^3$ and, as before, $\sigma=\mbox{\em sign}(\kappa)$.
\end{proposition}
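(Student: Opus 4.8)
The plan is to verify the formula in Proposition~\ref{prop:momentum} by unwinding the abstract definition~\eqref{eq:MomMapAbstract} in the chosen basis $\xi_1,\xi_2,\xi_3$ of $\g_\kappa$. Since we have identified $\g_\kappa\cong\R^3$ via this ordered basis, and $\g_\kappa^*\cong\R^3$ via the Euclidean product, the vector $\bm J(\q_1,\q_2,\dot\q_1,\dot\q_2)\in\R^3$ is by definition the triple whose $a$-th component is $\langle\bm J,\xi_a\rangle_{\g_\kappa}$, which by~\eqref{eq:MomMapAbstract} equals $\mu_1\langle\xi_a\q_1,\dot\q_1\rangle_\sigma+\mu_2\langle\xi_a\q_2,\dot\q_2\rangle_\sigma$. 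So I must show that for each $a=1,2,3$ and each pair $(\q,\dot\q)$ of vectors in $\R^3$,
\begin{equation*}
\langle\xi_a\q,\dot\q\rangle_\sigma=\bigl[(K_\sigma\q)\times(K_\sigma\dot\q)\bigr]_a.
\end{equation*}
Summing this identity over the two particles with weights $\mu_1,\mu_2$ then yields~\eqref{eq:MomMap}.

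**First I would** record the explicit action of the basis matrices: a direct computation gives $\xi_1\q=(0,-\sigma q_3,q_2)$, $\xi_2\q=(\sigma q_3,0,-q_1)$, $\xi_3\q=(-q_2,q_1,0)$ for $\q=(q_1,q_2,q_3)$. **Next I would** compute the pseudo-pairings $\langle\xi_a\q,\dot\q\rangle_\sigma=(\xi_a\q)^T K_\sigma\dot\q$; since $K_\sigma=\mathrm{diag}(1,1,\sigma)$ these are, for instance, $\langle\xi_1\q,\dot\q\rangle_\sigma=-\sigma q_3\dot q_2+\sigma q_2\dot q_3=\sigma(q_2\dot q_3-q_3\dot q_2)$, and similarly $\langle\xi_2\q,\dot\q\rangle_\sigma=\sigma q_3\dot q_1-q_1\sigma\dot q_3=\sigma(q_3\dot q_1-q_1\dot q_3)$ and $\langle\xi_3\q,\dot\q\rangle_\sigma=-q_2\dot q_1+q_1\dot q_2=q_1\dot q_2-q_2\dot q_1$. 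On the other side, writing $\bm u=K_\sigma\q=(q_1,q_2,\sigma q_3)$ and $\bm w=K_\sigma\dot\q=(\dot q_1,\dot q_2,\sigma\dot q_3)$, the cross product has components $(\bm u\times\bm w)_1=q_2\sigma\dot q_3-\sigma q_3\dot q_2=\sigma(q_2\dot q_3-q_3\dot q_2)$, $(\bm u\times\bm w)_2=\sigma q_3\dot q_1-q_1\sigma\dot q_3=\sigma(q_3\dot q_1-q_1\dot q_3)$, and $(\bm u\times\bm w)_3=q_1\dot q_2-q_2\dot q_1$. Comparing the two lists component by component establishes the claimed identity for every $a$, using crucially that $\sigma^2=1$.

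**The main point to be careful about** — not so much an obstacle as the place where the bookkeeping must be right — is the consistent use of the two identifications: $\g_\kappa\cong\R^3$ must be via the \emph{ordered} basis $(\xi_1,\xi_2,\xi_3)$, and $\g_\kappa^*\cong\R^3$ via the Euclidean (not the pseudo-Euclidean) inner product, so that evaluating the momentum covector against $\xi_a$ simply reads off its $a$-th Cartesian component. One should also note why $\bm J$ is well defined as stated on all of $TM_\kappa$: the formula~\eqref{eq:MomMap} is manifestly smooth on $T(\Sigma_\kappa\times\Sigma_\kappa)$, hence on the open subset $TM_\kappa$, and it does not depend on the removed diagonal/antipodal set. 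Finally, as a sanity check one can verify consistency with the flat limit: restricting to the plane $q_3=1$, $\dot q_3=0$ and $\sigma=+1$, the third component of~\eqref{eq:MomMap} reduces to $\mu_1(q_1^{(1)}\dot q_2^{(1)}-q_2^{(1)}\dot q_1^{(1)})+\mu_2(\cdots)$, the familiar angular momentum, while the first two components reproduce the two components of the linear momentum~\eqref{eq:linmom}; this reconciles the momentum map of Proposition~\ref{prop:momentum} with the conserved quantities used in the Euclidean arguments recalled in Section~\ref{S:MomMap}. Conservation of $\bm J$ along solutions is then automatic from the general theory of~\cite{MaRa} invoked above, since the Lagrangian~\eqref{eq:Lag} is $G_\kappa$-invariant.
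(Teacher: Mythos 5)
Your proposal is correct and follows essentially the same route as the paper: unwinding the defining relation \eqref{eq:MomMapAbstract} in the chosen basis of $\g_\kappa$ and matching the result, component by component, with the cross product $(K_\sigma\q)\times(K_\sigma\dot\q)$. The only (cosmetic) difference is that you treat both signs of $\sigma$ uniformly, whereas the paper writes out only the case $\sigma=-1$ and declares the spherical case standard; your computation is verified correct, though note that the identity $\sigma^2=1$ is not actually needed anywhere, since no term of the cross product involves the third components of both factors.
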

\begin{proof}
We only consider the case $\kappa<0$, i.e. $\sigma=-1$; the other case is simpler and quite standard. Introduce the notation
\begin{equation*}
\begin{split}
&\q_j=(x_j,y_j,z_j), \quad \dot \q_j=(\dot x_j,\dot y_j,\dot z_j), \quad j=1,2, \\
& \xi= (a_1,a_2,a_3)\in \R^3, \; \mbox{i.e. $\xi$ represents the matrix $\sum_{j=1}^3a_j\xi_j\in \so(2,1)$.}
\end{split}
\end{equation*}
A direct calculation shows that the right hand side of \eqref{eq:MomMapAbstract} equals:
\begin{equation*}
\sum_{j=1}^2\mu_j\left ( (-a_2z_j-a_3y_j)\dot x_j + (a_1z_j + a_3 x_j) \dot y_j + (-a_1y_j+a_2x_j)\dot z_j\right ),
\end{equation*}
which may be rewritten as the euclidean scalar product of $\xi=(a_1,a_2,a_3)\in \R^3$ with the vector
\begin{equation*}
\mu_1 (K_{-1} \q_1)\times (K_{-1} \dot \q_1)+\mu_2 (K_{-1} \q_2)\times (K_{-1} \dot \q_2) .
\end{equation*}

\end{proof}

\section{Proof of Eq. \eqref{eq:GC2}}
\label{S:GC2}

Assume that the particles are under the influence of an attractive potential depending only on their mutual distance.
We prove that the characterisation of the centre of mass as the point of collision of two particles which are initially at rest leads to the relation $\mu_1 r_1 =\mu_2r_2$ 
independently of the value of the curvature $\kappa$ and of the specific form of the attractive potential.
For the sake of completeness we begin by proving that such  formula holds in the usual case
$\kappa=0$. 

\paragraph{Case $\kappa=0$.} We may suppose, without loss of generality, that the masses at rest are located at 
$t=0$ at
\begin{equation*}
\q_1(0)=(-r_1,0), \quad \q_2(0)=(r_2,0) \in \R^2, \qquad r_1, \, r_2>0,
\end{equation*}
and  that at time $T>0$  they collide at the origin of $\R^2$. 
Due to the nature of the attractive force, 
the trajectories of the particles are given by
\begin{equation}
\label{eq:trajcollision}
\q_1(t)=(x_1(t),0), \quad \q_2(t)=(x_2(t),0),
\end{equation}
where $x_1(0)=-r_1$, $x_2(0)=r_2$, and  $x_1(T)=x_2(T)=0$.  Considering that the value of the linear momentum at $t=0$ is zero, the conservation of~\eqref{eq:linmom} yields:
\begin{equation*}
\mu_1 \dot x_1(t) + \mu_2 \dot x_2(t)=0,
\end{equation*}
which, after integration  from $0$ to $T$ gives~$\mu_1 r_1 =\mu_2r_2$  as required.

\begin{remark}
A  proof that the second component of $\q_j(t)$,  $j=1,2$, in~\eqref{eq:trajcollision} is identically zero 
may be given using  the symmetry argument reproduced  in the souvenir coffee mug of the
conference in  honour of James Montaldi in Guanajuato in 2018:  
 the problem is   equivariant  under reflections about the $x$-axis and the initial condition is fixed by this reflection. A similar reflection argument
also proves that the first component of $\q_j(t)$,  $j=1,2$, in the trajectories \eqref{eq:trajcollision-sphere} and \eqref{eq:trajcollision-pseudosphere}  below indeed vanishes. 
\end{remark}

\paragraph{Case $\kappa>0$.} The proof is completely analogous to the above. Because of the
homogeneity and isotropy of $\Sigma_\kappa$ we may suppose that the 
particles are initially located at 
\begin{equation*}
\q_1(0)=\frac{1}{\sqrt{\kappa}}(0,-\sin \alpha_1, \cos \alpha_1 ), \quad \q_2(0)=\frac{1}{\sqrt{\kappa}}(0,\sin \alpha_2, \cos \alpha_2 ) \in \Sigma_\kappa, \qquad \alpha_1, \, \alpha_2\in (0,\pi/2),
\end{equation*}
and they collide at time $T>0$ at the north pole $\frac{1}{\sqrt{\kappa}}(0,0, 1 )\in \Sigma_\kappa$. The particles then follow trajectories 
\begin{equation}
\label{eq:trajcollision-sphere}
\q_1(t)=\frac{1}{\sqrt{\kappa}}(0,-\sin \theta_1(t), \cos \theta_1(t) ), \quad \q_2(t)=\frac{1}{\sqrt{\kappa}}(0,\sin \theta_2(t), \cos \theta_2(t) ), 
\end{equation}
where $\theta_j(0)=\alpha_j$, $\theta_j(T)=0$, $j=1,2$.
The momentum~\eqref{eq:MomMap} along these trajectories is computed to be:
\begin{equation*}
\bm{J}(\q_1(t),\q_2(t), \dot \q_1(t), \dot \q_2(t))= \frac{1}{\kappa}(\mu_1\dot \theta_1(t) -\mu_2\dot \theta_2(t),0,0), 
\end{equation*}
and should be identically zero to agree with the value of $\bm{J}$ at time $t=0$. Integrating the equation $\mu_1\dot \theta_1(t) -\mu_2\dot \theta_2(t)=0$ from $0$ to $T$ leads to
$\mu_1\alpha_1=\mu_2\alpha_2$. The proof that $\mu_1 r_1 =\mu_2r_2$ is completed by noting that the Riemannian distance from $\q_j(0)$ to the north pole is $r_j=\alpha_j/\sqrt{\kappa}$, $j=1,2$.

\paragraph{Case $\kappa<0$.} The proof is again analogous to the cases $\kappa=0$ and $\kappa>0$. This time we assume without loss of generality that the initial position of the particles is
\begin{equation*}
\q_1(0)=\frac{1}{\sqrt{-\kappa}}(0,-\sinh \alpha_1, \cosh \alpha_1 ), \quad \q_2(0)=\frac{1}{\sqrt{-\kappa}}(0,\sinh \alpha_2, \cosh \alpha_2 ) \in \Sigma_\kappa, 
\end{equation*}
$\alpha_1, \, \alpha_2\in (0,\infty)$, and that they collide at time $T>0$ at the hyperboloid's vertex  $\frac{1}{\sqrt{-\kappa}}(0,0, 1 )\in \Sigma_\kappa$. The  trajectories of the particles are contained on the geodesic passing
through $\q_1(0)$ and $\q_2(0)$  and
are given by
\begin{equation}
\label{eq:trajcollision-pseudosphere}
\q_1(t)=\frac{1}{\sqrt{-\kappa}}(0,-\sinh \theta_1(t), \cosh \theta_1(t) ), \quad \q_2(t)=\frac{1}{\sqrt{-\kappa}}(0,\sinh \theta_2(t), \cosh \theta_2(t) ), 
\end{equation}
where  $\theta_j(0)=\alpha_j$, $\theta_j(T)=0$, $j=1,2$.
The momentum~\eqref{eq:MomMap} along these motions simplifies to:
\begin{equation*}
\bm{J}(\q_1(t),\q_2(t), \dot \q_1(t), \dot \q_2(t))= \frac{1}{\kappa}(\mu_1\dot \theta_1(t) -\mu_2\dot \theta_2(t),0,0),
\end{equation*}
and once again should vanish identically to agree with the value of $\bm{J}$ at time $t=0$. Integrating the equation $\mu_1\dot \theta_1(t) - \mu_2\dot \theta_2(t)=0$ from $0$ to $T$ leads to
$\mu_1\alpha_1=\mu_2\alpha_2$. As before, the proof that $\mu_1 r_1 =\mu_2r_2$ is completed by noting that the Riemannian distance from $\q_j(0)$ to the 
hyperboloid's vertex is $r_j=\alpha_j/\sqrt{-\kappa}$, $j=1,2$.

\section{Proof   of Eq. \eqref{eq:GC3}}
\label{S:GC3}

As in the previous section, we assume that the  particles are under the influence of an attractive potential depending only on their mutual distance.
We prove that, independently of the form of the attractive force,  the characterisation of the centre of mass as the centre of rotation of  uniformly rotating solutions that preserve the distance between the particles leads to the
following relations depending on the curvature $\kappa$:
\begin{equation}
\label{eq:CentreofSteadyRotation}
\begin{cases}
\mu_1r_1=\mu_2r_2 \quad \mbox{if $\kappa=0$;} \\
\mu_1 \sin ( 2\sqrt{\kappa} r_1) =\mu_2 \sin ( 2\sqrt{\kappa} r_2)\quad \mbox{if $\kappa>0$;}  \\
\mu_1 \sinh ( 2\sqrt{-\kappa} r_1) =\mu_2 \sinh ( 2\sqrt{-\kappa} r_2)\quad \mbox{if $\kappa<0$.} 
\end{cases}
\end{equation}

\begin{remark}
To be precise, in  this section we only prove that \eqref{eq:CentreofSteadyRotation} are  necessary conditions for the
existence of uniformly rotating solutions in which the distance between the particles remains constant. The existence of 
this kind of solutions when $\kappa\neq 0$ for arbitrary attractive potentials was recently proved in \cite{BGNMM}.
\end{remark}

Throughout this section we consider  $\q_1, \q_2$,  $\bm{p}$ and the image of $\bm{J}$  as column vectors.

\paragraph{Case $\kappa=0$.}  As in the  previous section, we give a proof of the case $\kappa=0$ for completeness. We again suppose,
without loss of generality, that 
at  time $t=0$ the masses lie at
\begin{equation*}
\q_1(0)=\begin{pmatrix} -r_1 \\ 0 \end{pmatrix}, \quad \q_2(0)=\begin{pmatrix} r_2 \\ 0 \end{pmatrix} \in \R^2, \qquad r_1, \, r_2>0,
\end{equation*}
but this time we suppose they rotate uniformly about the origin maintaining a constant distance between them at all time.
 According to these assumptions, the particles follow trajectories
\begin{equation*}
\q_j(t)=\begin{pmatrix} \cos \omega t & -\sin \omega t \\ \sin \omega t & \cos \omega t \end{pmatrix}\q_j(0), \qquad j=1,2,
\end{equation*}
for a certain angular speed $0\neq \omega\in \R$. Therefore, their linear momentum~\eqref{eq:linmom} equals
 \begin{equation*}
\bm{p}=\omega (\mu_1r_1-\mu_2r_2)\begin{pmatrix}  \sin \omega t \\  -\cos \omega t \end{pmatrix},
\end{equation*}
which is constant if and only if~$\mu_1r_1=\mu_2r_2$.

\paragraph{Case $\kappa>0$.} Suppose without loss of generality
 that the particles are initially positioned at
\begin{equation*}
\q_1(0)=\frac{1}{\sqrt{\kappa}} \begin{pmatrix} 0 \\ -\sin \alpha_1 \\ \cos \alpha_1 \end{pmatrix},
 \quad 
\q_2(0)=\frac{1}{\sqrt{\kappa}} \begin{pmatrix} 0 \\ \sin \alpha_2 \\  \cos \alpha_2 \end{pmatrix} \in \Sigma_\kappa, \qquad \alpha_1, \, \alpha_2\in (0,\pi/2),
\end{equation*}
and that  they rotate uniformly about the north pole  maintaining a constant distance between them at all time. The particles then follow trajectories
\begin{equation*}
\begin{split}
\q_j(t)&=
\begin{pmatrix}
\cos \omega t & -\sin \omega t & 0 \\ \sin \omega t & \cos \omega t  & 0 \\ 0 & 0 &1 \end{pmatrix}\q_j(0), \quad j=1,2,
\end{split}
\end{equation*}
for an angular speed $0\neq \omega\in \R$. A direct calculation shows that their momentum~\eqref{eq:MomMap} equals:
\begin{equation*}
\bm{J}(\q_1(t),\q_2(t), \dot \q_1(t), \dot \q_2(t))=  \frac{\omega}{2\kappa} \begin{pmatrix} - (\mu_1\sin 2\alpha_1 - \mu_2 \sin 2\alpha_2)\sin \omega t \\  (\mu_1\sin 2\alpha_1 - \mu_2 \sin 2\alpha_2)\cos \omega t \\2(\mu_1\sin^2\alpha_1 + \mu_2\sin^2\alpha_2) \end{pmatrix},
\end{equation*}
which is constant if and only if $\mu_1\sin 2\alpha_1 = \mu_2 \sin 2\alpha_2$. Given that the Riemannian distance from $\q_j(0)$ to the north pole is $r_j=\alpha_j/\sqrt{\kappa}$, $j=1,2$, we conclude  that the momentum is constant if and only if $\mu_1\sin (2\sqrt{\kappa} r_1) = \mu_2 \sin  (2\sqrt{\kappa} r_2)$ as required.

\paragraph{Case $\kappa<0$.} Finally, we suppose (without loss of generality) 
that the initial position of the particles is
\begin{equation}
\label{eq:initial-hyp}
\q_1(0)=\frac{1}{\sqrt{-\kappa}} \begin{pmatrix} 0 \\ -\sinh \alpha_1 \\ \cosh \alpha_1 \end{pmatrix}, \quad \q_2(0)=\frac{1}{\sqrt{-\kappa}} \begin{pmatrix} 0\\ \sinh \alpha_2 \\
 \cosh \alpha_2 \end{pmatrix} \in \Sigma_\kappa,
%
\end{equation}
$\alpha_1, \, \alpha_2\in (0,\infty)$, and that they rotate uniformly around the hyperboloid's vertex  $\frac{1}{\sqrt{-\kappa}}(0,0, 1 )\in \Sigma_\kappa$ maintaining a constant distance between them at all time. The trajectory followed by the
particles is now given by
\begin{equation*}
\begin{split}
\q_j(t)&=\begin{pmatrix}
\cos \omega t & -\sin \omega t & 0 \\ \sin \omega t & \cos \omega t  & 0 \\ 0 & 0 &1 \end{pmatrix}\q_j(0), \quad j=1,2,
\end{split}
\end{equation*}
for an angular speed $0\neq \omega\in \R$. The momentum~\eqref{eq:MomMap} along these trajectories equals:
\begin{equation*}
\bm{J}(\q_1(t),\q_2(t), \dot \q_1(t), \dot \q_2(t))=  \frac{\omega}{2\kappa} \begin{pmatrix} - (\mu_1\sinh 2\alpha_1 - \mu_2 \sinh 2\alpha_2)\sin \omega t \\  (\mu_1\sinh 2\alpha_1 - \mu_2 \sinh 2\alpha_2)\cos \omega t \\ -2(\mu_1\sinh^2\alpha_1 + \mu_2\sinh^2\alpha_2) \end{pmatrix},
\end{equation*}
which is constant if and only if $\mu_1\sinh 2\alpha_1 = \mu_2 \sinh 2\alpha_2$. Given that the Riemannian distance from $\q_j(0)$ to the vertex  $\frac{1}{\sqrt{-\kappa}}(0,0, 1 $) is $r_j=\alpha_j/\sqrt{-\kappa}$, $j=1,2$, we conclude  that the momentum is constant if and only if $\mu_1\sinh (2\sqrt{-\kappa} r_1) = \mu_2 \sinh  (2\sqrt{-\kappa} r_2)$ as required.

\begin{remark}
For $\kappa<0$ there exists a different kind of stationary motion  with the property that the distance between the particles 
remains constant for all $t$. These are the so-called {\em hyperbolic relative equilibria}~\cite{DPR,GNMPR,BGNMM} which are unbounded solutions that, for the initial condition~\eqref{eq:initial-hyp}, correspond to a `hyperbolic rotation'
\begin{equation}
\label{eq:hyp-rot}
\begin{split}
\q_j(t)=\begin{pmatrix}
\cosh \omega t & 0 & \sinh \omega t  \\0  & 1  & 0 \\ \sinh \omega t & 0 & \cosh \omega t \end{pmatrix}\q_j(0), \qquad j=1,2,
\end{split}
\end{equation}
for a certain `rotation speed' $0\neq \omega\in \R$. These solutions exist as a balance of the
gravitational force and the tendency of the geodesics to `spread out' when the curvature is negative
(see the discussion in \cite{GNMPR}). 
Along such solutions one cannot talk of a fixed centre of rotation. However, it is interesting to note
that the moving point
\begin{equation*}
\hat \q (t) := \frac{1}{\sqrt{-\kappa}} \begin{pmatrix} \sinh \omega t \\ 0 \\ \cosh \omega t \end{pmatrix}
\end{equation*}
traverses a geodesic at constant speed 
and maintains a constant distance $r_j=\alpha_j/\sqrt{-\kappa}$,  with $\mu_j$,
$j=1,2$. This property is reminiscent of the  solutions of the two-body problem where the center of mass travels at constant non-zero speed and
the particles rotate uniformly about it.

Interestingly, the condition that $\mu_1\sinh (2\sqrt{-\kappa} r_1) = \mu_2 \sinh  (2\sqrt{-\kappa} r_2)$ is also necessary 
for the existence of this type of solutions. Indeed,
the momentum~\eqref{eq:MomMap} along the trajectory~\eqref{eq:hyp-rot} equals:
\begin{equation*}
\bm{J}(\q_1(t),\q_2(t), \dot \q_1(t), \dot \q_2(t))=  \frac{\omega}{2\kappa} \begin{pmatrix} - (\mu_1\sinh 2\alpha_1 - \mu_2 \sinh 2\alpha_2)\sinh \omega t \\  2(\mu_1\cosh^2\alpha_1 + \mu_2\cosh^2\alpha_2) \\ -(\mu_1\sinh 2\alpha_1 - \mu_2 \sinh 2\alpha_2)\cosh \omega t   \end{pmatrix},
\end{equation*}
which is constant if and only if $\mu_1\sinh 2\alpha_1 = \mu_2 \sinh 2\alpha_2$.
\end{remark}

\section{Final remarks}
\label{S:Conc}
We have given evidence to  show that the generalisation of the notion of centre of mass to spaces of
 non-zero constant curvature is
not straightforward. In particular, we have shown that (for distinct masses)
 the relativistic lever rule  proposed by Galperin \cite{Galperin}
does not possess some basic dynamical properties of the centre of mass of the classical 2-body problem in $\R^2$.
A natural question is whether there is a sensible definition of the centre of mass that is relevant for the 
analysis of the 2-body problem
 in surfaces of non-zero constant curvature. Below I explain why  this paper shows that the   
answer to this question is negative. This conclusion is in agreement with observations made before, e.g. \cite{Diacu}, and
seems to be related with the absence of Galilean boosts for the problem in the case of non-zero curvature.

The fundamental property of the centre of mass of the classical $2$-body problem in $\R^2$ is that  it   travels at 
constant velocity along {\em all}
solutions of the problem. Hence, the question is if one can define a centre of mass in a space of constant non-zero curvature 
(solely in terms of the masses and positions of the particles) with the property that it travels along 
a geodesic at  a constant speed along {\em all}
solutions of the problem. In this paper we have considered the collision and steady rotation solutions, which are perhaps the
simplest solutions  to the 
2-body problem. The collision point and the centre of steady rotation indeed satisfy the requirement of travelling along a geodesic
at constant speed (equal to zero). The fact that these points are determined by  distinct relations - \eqref{eq:GC2} and
\eqref{eq:GC3} - contradicts the existence of the  definition of centre of mass with the desired properties.

\section*{Acknowledgements}
I am thankful to James Montaldi for several conversations during the last years on many versions
of the  2-body problem. I acknowledge support for my research from the Program 
UNAM-DGAPA-PAPIIT-IN115820  and from the
Alexander von Humboldt Foundation for a Georg Forster Experienced Researcher Fellowship that funded a research visit
to TU Berlin where part of this work was done. Finally, I wish to acknowledge the referees for their useful comments
 that helped me to improve this paper.
\vskip 1cm

%
%
\end{document}